\newcommand{\expec}[2]{\mathbb{E}_{#2} \brock{#1}
}
\newcommand{\paren}[1]{\mathopen{}\mathclose\bgroup\left(#1\aftergroup\egroup\right)}
\newcommand{\brock}[1]{\mathopen{}\mathclose\bgroup\left[#1\aftergroup\egroup\right]}
\newcommand{\curly}[1]{\mathopen{}\mathclose\bgroup\left\{#1\aftergroup\egroup\right\}}
\newtheorem{theorem}{Theorem}[section]
\newcommand\bbP{\ensuremath{\mathbb{P}}} 
\newcommand{\suchthat}{\;\ifnum\currentgrouptype=16 \middle\fi|\;}
\DeclarePairedDelimiterX{\infdivx}[2]{(}{)}{%
  #1\;\delimsize\|\;#2%
}
\DeclareMathOperator*{\supp}{supp}
\title{\LARGE \bf
Simulating Emergent Properties of Human Driving 
Behavior \\ Using Multi-Agent Reward Augmented Imitation Learning 
}
\author{Raunak P. Bhattacharyya, Derek J. Phillips, Changliu Liu, \\ Jayesh K. Gupta, Katherine Driggs-Campbell, and Mykel J. Kochenderfer
\thanks{R. Bhattacharyya, D. Phillips, C. Liu, J.K. Gupta, and M.J. Kochenderfer are with the Stanford Intelligent Systems Laboratory in the Department of Aeronautics and Astronautics at Stanford University, Stanford, CA 94305, USA (email: \{raunakbh, djp42, changliuliu, jayeshkg, mykel\}@stanford.edu\}). 
K. Driggs-Campbell is with the Department of Electrical and Computer Engineering at the University of Illinois at Urbana-Champaign, Urbana, IL 61801 (email: krdc@illinois.edu).}%
}
\begin{document}
\maketitle
\thispagestyle{empty}
\bibliographystyle{IEEEtran.bst}
\pagestyle{empty}

\begin{abstract}
Recent developments in multi-agent imitation learning have shown promising results for modeling the behavior of human drivers.
However, it is challenging to capture emergent traffic behaviors that are observed in real-world datasets.  
Such behaviors arise due to the many local interactions between agents that are not commonly accounted for in imitation learning.
This paper proposes Reward Augmented Imitation Learning (RAIL), which integrates reward augmentation into the multi-agent imitation learning framework and allows the designer 
to specify prior knowledge in a principled fashion.
We prove that convergence guarantees for the imitation learning process are preserved under the application of reward augmentation.
This method is validated in a driving scenario, where an entire traffic scene is controlled by driving policies learned using our proposed algorithm.
Further, we demonstrate improved performance in comparison to traditional imitation learning algorithms both in terms of the local actions of a single agent and the behavior of emergent properties in complex, multi-agent settings.

\end{abstract}

\section{Introduction}
Robot learning from human demonstrations has been a subject of significant interest in recent years~\cite{argall2009survey}. 
Imitation learning has been applied to vehicle navigation, humanoid robots, and computer games~\cite{hussein2017imitation}.
This paper focuses on imitation learning for building reliable human driver models.

The autonomous driving literature has established that it is infeasible to build a statistically significant case for the safety of a system solely through real-world testing~\cite{iso201126262,koopman2016challenges}. 
Validation through simulation is an alternative to real-world testing, with the ability to evaluate vehicle performance in large numbers of scenes quickly, safely, and economically~\cite{Morton2018}.
This paper seeks to extend state of the art imitation learning to improve our ability to accurately generate realistic driving scenarios. In such safety critical settings, representative models of human driving behavior are essential in the validation of autonomous driving systems.

\begin{figure*}[t!]
\centering
\resizebox{2\columnwidth}{!}{

\includegraphics[clip]{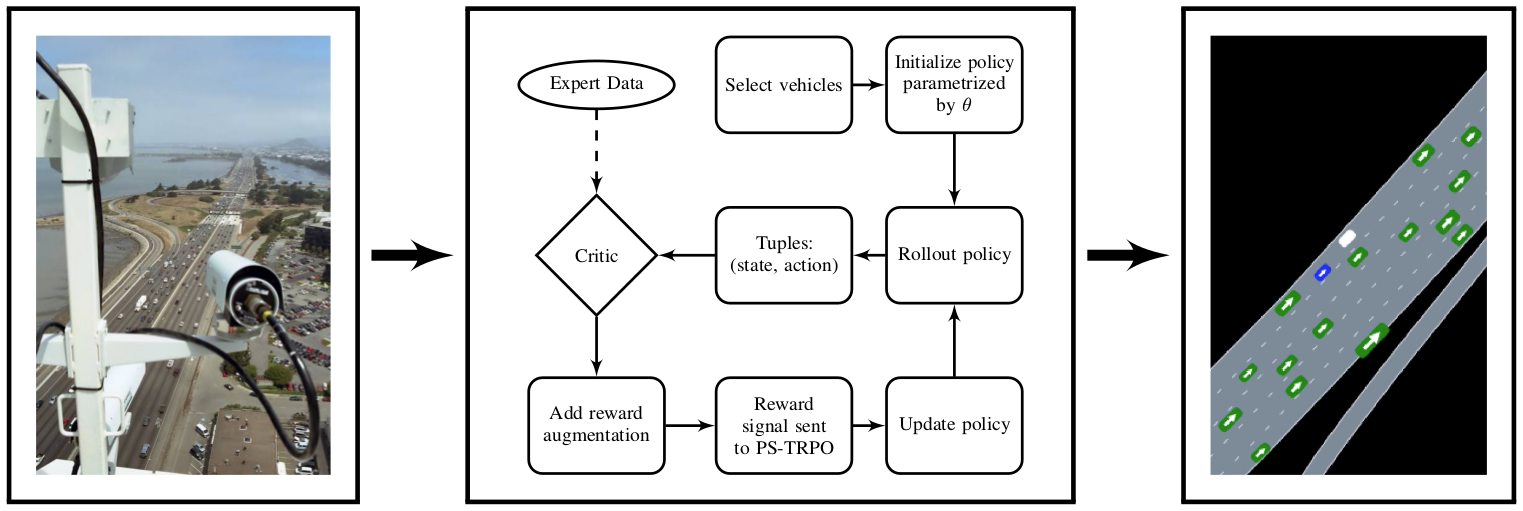}
}

\caption{
\small
The imitation learning pipeline: the demonstration data from the NGSIM dataset~\cite{systematics2007ngsim} (left panel) is fed into RAIL (middle panel) to create driving policies that can be used for validation of autonomous vehicles in simulation (right panel).}
\end{figure*}

Human driving situations are inherently multi-agent in nature. Typical human driving scenes are composed of several vehicles that interact to exhibit emergent patterns of traffic behavior that cannot be easily predicted from the properties of the individual vehicles alone.
For example, given two very similar initial scenes, the vehicles can reach very different configurations after just a few seconds because small changes in states can quickly compound into large differences in the resulting vehicle behaviors and motion
pattern.
Reliable human driver models must be capable of imitating these emergent properties of traffic behavior. 



Generative Adversarial Imitation Learning (GAIL) \cite{ho2016generative} has recently been used to model human driving behavior.
GAIL performs well when imitating the driving behavior in single agent settings~\cite{kuefler2017imitating}, outperforming behavioral cloning and rule-based driver models (e.g., IDM+MOBIL~\cite{treiber2000congested}).
However, GAIL does not scale to imitating the behavior of multiple human driven vehicles because the multi-agent setting leads to the problem of covariate-shift~\cite{bhattacharyya2018multi}.

The PS-GAIL algorithm uses the framework provided by recent work extending GAIL to the multi-agent setting~\cite{song2018magail}, borrowing ideas from the PS-TRPO algorithm~\cite{Gupta2017}.
PS-GAIL outperforms GAIL in terms of imitation performance and scalability in imitating multiple interacting human driven vehicles~\cite{bhattacharyya2018multi}.
However, PS-GAIL has significant room for improvement in terms of imitation performance, specifically in terms of reducing undesirable traffic phenomena arising out of interactions between vehicles, such as off-road driving, collisions, and hard braking.

This paper modifies the GAIL approach to enable the use of external rewards in the multi-agent setting.
The resulting algorithmic procedure improves imitation performance compared to PS-GAIL in three ways. First, it performs better at reproducing individual driving behaviors. Second, resulting policies exhibit reductions in undesirable traffic phenomena such as collisions and offroad driving. Third, emergent properties of multi-agent driving, such as lane changes and spacing between vehicles, are shown to approach human driving behavior. 

While using shaped rewards in the context of imitation learning has been proposed previously~\cite{li2017infogail,judah2014imitation,syed2012imitation,knox2012reinforcement}, this paper differs from the existing literature in two aspects.
First, we propose directly considering the imitation learning problem in the multi-agent setting using parameter sharing.
Second, this approach penalizes undesirable traffic phenomena through reward augmentation, and assesses the resulting impact on emergent properties of multi-agent driving behavior.

This paper makes the following contributions:
(1) We propose a mathematical formulation for incorporating metrics of undesirable traffic phenomena as constraints in the problem of finding policies through imitation learning;
(2) We provide a framework for the designer of the imitation learning agent to provide prior knowledge in the form of reward augmentation to help the learning process; and 
(3) We demonstrate the RAIL algorithm in a case study of modeling human driving behavior and compare the imitation performance against results from existing algorithms. 

The link to the github repository containing the source code for the experiments can be found at 
\url{https://github.com/sisl/ngsim\_env}.

\section{Problem Definition}
We consider the problem of imitation learning in a multi-agent setting. 
The objective is to improve imitation performance by expanding the scope of the imitation algorithm to include imitation of emergent properties.
We hypothesize that by imitating both the local and emergent behaviors, the resulting policy will improve our ability to mimic human behaviors in a multi-agent setting.

\subsection{Formulation}
We formulate highway driving as a sequential decision making task, in which the driver obeys a stochastic policy that maps observed road conditions to a probability distribution over driving actions~\cite{kaelbling1998planning,kochenderfer2015decision}. Given a dataset consisting of a sequence of state-action tuples $(s_t, a_t)$ demonstrating highway driving and a class of policies $\pi_\theta$ parameterized by $\theta$, we adopt imitation learning to infer this policy. 

We use the multi-agent extension of Markov decision processes adapted to the imitation learning framework~\cite{song2018magail,littman1994markov}.
Suppose there are $n$ agents. The state, action, and policy of agent $i$ are denoted $s_i$, $a_i$, and $\pi_i$, respectively. 
The state, action, and policy of the multi-agent system are denoted $\mathbf{s}=[s_1,\ldots,s_n]$, $\mathbf{a}=[a_1,\ldots,a_n]$, and $\bar\pi(s_1, \ldots, s_n)=(\pi_1(s_1), \ldots, \pi_n(s_n))$. 
The state space and the action space of the multi-agent system are denoted $\mathcal{S}$ and $\mathcal{A}$, respectively. 
In the remainder of this paper, we use $s$ and $a$ without the subscripts to refer to the single agent scenario.
We make some simplifying assumptions to the general Markov Games framework, which include agents being homogeneous (every agent has the same action and observation space), each agent getting independent rewards (as opposed to there being a joint reward function), and the reward function being the same for all the agents, as motivated in~\cite{bhattacharyya2018multi}.

We define the $\gamma$-discounted \emph{state occupancy measure} of a policy $\bar\pi$ by $\rho_{\bar\pi}(\mathbf{s}) = \sum_{t=0}^\infty \gamma^t  \bbP_{p_0, \bar\pi}\left[ \mathbf{s}_t = \mathbf{s} \right]$, 
where $\bbP_{p_0, \bar\pi}\left[ \mathbf{s}_t = \mathbf{s} \right]$ is the probability of landing in state $\mathbf{s}$ at time $t$, when following $\bar\pi$ starting from $\mathbf{s_0} \sim p_0$. 
When convenient, we will overload notation for \emph{state-action occupancy measure}: $\rho_{\bar\pi}(\mathbf{s},\mathbf{a}) = \bbP_{\bar\pi}\left[\mathbf{a} \mid \mathbf{s}\right]\rho_{\bar\pi}(\mathbf{s})$, where $\bbP_{\bar\pi}\left[\mathbf{a} \mid \mathbf{s}\right]$ is the probability of executing action $\mathbf{a}$ in state $\mathbf{s}$. 
We denote the \emph{support} of the occupancy measure as $\supp$ where $\supp(\rho_{\bar\pi}):=\{(\mathbf{s},\mathbf{a}):\rho_{\bar\pi}(\mathbf{s},\mathbf{a})>0\}$.

Consider a multi-agent policy $\bar \pi$, which maps the multi-agent system state in $\mathcal{S}$ to a distribution over the multi-agent action space $\mathcal{A}$. 
Given the demonstrated data $\bar \pi_E$, we need to ensure that the greatest difference between $\bar\pi$ and $\bar\pi_E$ is small. The difference is measured between the demonstrated trajectory and the roll-out trajectory given $\bar\pi$ in a finite time horizon.
Using the GAIL framework, our goal is to minimize the distance between the occupancy measures $\rho_{\bar\pi}$ and $\rho_{\bar\pi_E}$.
Further, to ensure centralized training with decentralized control, and to provide prior knowledge, we introduce parameter sharing~\cite{Gupta2017} and reward augmentation~\cite{ng1999policy}.

Mathematically, introducing parameter-sharing and reward augmentation poses two constraints on the function space of the policy $\bar\pi$. 
For parameter-sharing, we are enforcing that $\pi_i = \pi_j = \pi$ for any $i$ and $j$, where $\pi$ denotes the policy for single agent. 
Hence, $\bar\pi(s_1,\ldots,s_n) = (\pi(s_1),\ldots,\pi(s_n))$. 
For reward augmentation, we require that $\pi$ belongs to a certain set such that undesired actions are discouraged. 
For example, the vehicle should not drive off road, collide with others, or brake too hard.
Such undesired state-action pairs are denoted as belonging to the set $U$. 
The constraint on the policy is denoted by $\Pi$:
\begin{equation}
\label{eq: constraint on policy}
\Pi = \{\bar\pi : \pi = \pi_i, \forall \, i \text{, and } \bbP_{\bar\pi}\left[\mathbf{a} \mid \mathbf{s}\right] = 0,\forall \, (\mathbf{s},\mathbf{a})\in U\} 
\end{equation}

Considering Wasserstein distance~\cite{arjovsky2017wasserstein}, the following constrained minimax problem for imitation learning is formulated:
\begin{equation}\label{eq: general constrained minimax formulation}
\min_{\bar\pi\in\Pi}  \max_{D} \left\{\expec{D(\mathbf{s},\mathbf{a})}{\bar\pi_E}-\expec{D(\mathbf{s},\mathbf{a})}{\bar \pi}\right\} 
\end{equation}
where the critic, $D$, learns to output a high score when encountering pairs from $\bar\pi_{E}$, and a low score when encountering pairs from $\bar\pi$. $D$ should be optimized for all functions.

\subsection{Solution Approach}
The constrained minimax is solved by transforming the problem to an unconstrained form.
The constraint for parameter sharing is naturally encoded by sharing the same policy for all agents.
The constraint for reward augmentation is enforced by adding a reward augmentation regularizer in the function. 
Thus, the unconstrained problem becomes:
\begin{equation}
\underset{\pi}{\min} \
\underset{D}{\max} \
\expec{D(\mathbf{s},\mathbf{a})}{\bar\pi_{E}} 
- \expec{D(\mathbf{s},\mathbf{a})}{\pi} + r\expec{\mathbf{1}_U}{\pi} 
\end{equation}
where $r$ is the penalty, and $\mathbf{1}_U$ is an indicator function that is non-zero if and only if $(\mathbf{s},\mathbf{a})\in U$. 
The penalty $r$ can either be a constant value or a barrier function.
We have binary penalty when $r$ is constant, and smooth penalty when $r$ is a continuous function that reaches zero on the boundary of the set $U$. 
Note that the term $\expec{D(\mathbf{s},\mathbf{a})}{\pi}$ is different from $\expec{D(\mathbf{s},\mathbf{a})}{\bar\pi}$, where the former notation requires that all agents use the same policy $\pi$ with shared parameters.

We are now ready to introduce an algorithm to solve the problem as formulated above.
This algorithm is called Reward Augmented Imitation Learning (RAIL, cf. Risk Averse Imitation Learning~\cite{santara2017rail}).
We parameterize the single agent policy using $\theta$ and the critic using $\psi$. The parameterized policy and critic are denoted by $\pi_\theta$ and $D_\psi$, respectively.
Under this parametrization, the objective function becomes:
\begin{equation}
\label{eq:wasserstein}
\begin{aligned}
\underset{\theta}{\min} \
\underset{\psi}{\max} \
\expec{D_{\psi}(\mathbf{s},\mathbf{a})}{\bar\pi_{E}} 
- \expec{D_{\psi}(\mathbf{s},\mathbf{a})}{\pi_{\theta}} + r\expec{\mathbf{1}_U}{\pi_\theta} \text{.}
\end{aligned}
\end{equation}

To solve for the desired $\pi_\theta$, the following two steps are performed iteratively:

\textsc{Step 1:} maximize $D_{\psi}$.
Similar to single agent GAIL, this step involves the rollout and the update of the critic $D_{\psi}$.

\textsc{Step 2:} minimize policy $\pi_\theta$.
This step is where the constraints are taken into account.


\cref{algo:rail} provides the pseudo code that enacts the above two step procedure and incorporates the reward augmentation by providing penalties. 



\begin{algorithm}[t]
  \caption{RAIL}
  \label{algo:rail}
  \begin{algorithmic}
    \STATE {\bfseries Input:} Expert trajectories $\tau_E \sim \bar\pi_E$, Shared policy parameters $\Theta_0$, Critic parameters $\psi_0$, Trust region size $\Delta_{KL}$
    \FOR{$k \gets 0, 1, \dotsc$}
    \STATE Rollout trajectories for all agents $\vec{\tau} \sim \pi_{\theta_k}$
    \STATE Score $\vec{\tau}$ with critic, generating reward $\tilde{p}(s_{t},a_{t};\psi_k)$ and \\ $\quad$ added penalty $r$
    \STATE Batch trajectories obtained from all the agents
    \STATE Calculate advantage values
    \STATE Take a Trust Region Policy Optimization (TRPO)~\cite{schulman2015trust} step to find $\pi_{\theta_{k+1}}$ using $\Delta_{KL}$
    \STATE Update the critic parameters $\psi$ 
    \ENDFOR
  \end{algorithmic}
\end{algorithm}

\vspace{-10pt}
\subsection{Theoretical Analysis: Convergence and Optimality}
This section shows the convergence and optimality of the constrained minimax problem using non-parameterized policy and critic. 
In the following discussion, the occupancy measure refers to the \emph{state-action occupancy measure}. 
Given the occupancy measure $\rho_{\bar\pi}$,  we have
\begin{eqnarray}
\expec{D(\mathbf{s},\mathbf{a})}{\bar\pi} &=& \sum_{t=0}^\infty\gamma^t D(\mathbf{s}_t,\mathbf{a}_t) \nonumber\\
&=& \int_{\mathcal{A}}\int_{\mathcal{S}}\rho_{\bar\pi}(\mathbf{s},\mathbf{a})D(\mathbf{s},\mathbf{a})d\mathbf{s}d\mathbf{a}\text{,}
\end{eqnarray}
where $\mathbf{s}_t$ and $\mathbf{a}_t$ are rollout data from the policy $\bar\pi$. 
By Proposition 3.1 of \cite{ho2016generative}, there is a one-to-one correspondence between the policy $\bar\pi$ and the occupancy measure $\rho_{\bar\pi}$. 

The minimax objective function \eqref{eq: general constrained minimax formulation} can be re-written as
\begin{eqnarray}
&\min_{\rho_{\bar\pi}\in\Sigma} \quad \max_{D}  &\int_{\mathcal{A}}\int_{\mathcal{S}}\rho_{\bar\pi_E}(\mathbf{s},\mathbf{a})D(\mathbf{s},\mathbf{a})d\mathbf{s}d\mathbf{a} \nonumber\\
&&-\int_{\mathcal{A}}\int_{\mathcal{S}}\rho_{\bar\pi}(\mathbf{s},\mathbf{a})D(\mathbf{s},\mathbf{a})d\mathbf{s}d\mathbf{a} 
\label{eq: minimax occupancy measure}
\end{eqnarray}
where $\Sigma$ is the constraint on the occupancy measure equivalent to \eqref{eq: constraint on policy} with
\begin{eqnarray}
&\Sigma = \{ \rho_{\bar\pi}:& \rho_{\pi}(s_i,a_i) =  \rho_{\pi}(s_j,a_j), \forall i,j\text{; }\nonumber\\ && \rho_{\bar\pi}(\mathbf{s},\mathbf{a})=0,\forall (\mathbf{s},\mathbf{a})\in U\}\text{,}\label{eq: constraint minimax constraint}
\end{eqnarray}
and $\rho_{\pi}(s_i,a_i)$ is a marginal occupancy measure by integrating out $(s_j,a_j)$ for $j\neq i$ in $\rho_{\bar\pi}(\mathbf{s},\mathbf{a})$.

\begin{theorem}\label{thm: convergence and optimality}
The solution converges in measure by iteratively solving the minimax problem \eqref{eq: minimax occupancy measure} if the following conditions hold:\footnote{In practice, these assumptions may not exactly hold.}
\begin{enumerate}
\item $D$ and $\bar\pi$ have enough capacity (representing all functions);
\item $D(\mathbf{s},\mathbf{a})$ attains the optimal value $D^*(\mathbf{s},\mathbf{a})$ at each iteration;
\item $\rho_{\bar\pi}$ is updated so as to improve the criterion $\min_{\rho_{\bar\pi}\in\Sigma} \int_{\mathcal{A}}\int_{\mathcal{S}}\left[\rho_{\bar\pi_E}(\mathbf{s},\mathbf{a})-\rho_{\bar\pi}(\mathbf{s},\mathbf{a})\right]D^*(\mathbf{s},\mathbf{a})d\mathbf{s}d\mathbf{a}$.
\end{enumerate}
Moreover, the solution converges in measure to the optimal solution of the following problem:
\begin{equation}
\min_{\rho_{\bar\pi}\in\Sigma} \int_{\mathcal{A}}\int_{\mathcal{S}}\|\rho_{\bar\pi_E}(\mathbf{s},\mathbf{a})-\rho_{\bar\pi}(\mathbf{s},\mathbf{a})\|^2d\mathbf{s}d\mathbf{a}\text{.}\label{eq: objective occupancy measure two norm}
\end{equation}
\end{theorem}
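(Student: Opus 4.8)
The plan is to mirror the two-part structure of the GAN analysis of Goodfellow et al.: first characterize the optimal critic $D^*$ for a fixed occupancy measure, then establish convergence of the outer minimization. For the inner problem I would fix $\rho_{\bar\pi}$ and maximize the bracketed term in \eqref{eq: minimax occupancy measure} over $D$. Since that term is \emph{linear} in $D$, it is unbounded unless $D$ is restricted; the relevant restriction here is the $L^2$ ball $\int_{\mathcal A}\int_{\mathcal S} D(\mathbf s,\mathbf a)^2\, d\mathbf s\, d\mathbf a \le 1$. With this constraint the maximization is a Cauchy--Schwarz problem, so the optimizer is the normalized density difference
\begin{equation}
D^*(\mathbf s,\mathbf a) = \frac{\rho_{\bar\pi_E}(\mathbf s,\mathbf a) - \rho_{\bar\pi}(\mathbf s,\mathbf a)}{\left(\int_{\mathcal A}\int_{\mathcal S}\left(\rho_{\bar\pi_E} - \rho_{\bar\pi}\right)^2 d\mathbf s\, d\mathbf a\right)^{1/2}} \text{,}
\end{equation}
and the optimal inner value equals $\|\rho_{\bar\pi_E}-\rho_{\bar\pi}\|_{L^2}$. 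Condition~1 (capacity) is what lets me assume $D^*$ is representable, and Condition~2 is what lets me assume it is actually attained at each iteration.

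Next I would substitute $D^*$ back, reducing the minimax \eqref{eq: minimax occupancy measure} to the single outer problem $\min_{\rho_{\bar\pi}\in\Sigma}\|\rho_{\bar\pi_E}-\rho_{\bar\pi}\|_{L^2}$. Because $t\mapsto t^2$ is strictly increasing on $[0,\infty)$, this problem and the squared objective \eqref{eq: objective occupancy measure two norm} share the same minimizer, which identifies the limit claimed by the theorem. I would then note that the feasible set $\Sigma$ in \eqref{eq: constraint minimax constraint} is closed and convex (it is cut out by the linear marginal-matching equalities and the linear vanishing constraint on $U$), and that the value function $\rho_{\bar\pi}\mapsto \max_D U(\rho_{\bar\pi},D)$ is convex in $\rho_{\bar\pi}$, being a pointwise supremum of maps that are linear in $\rho_{\bar\pi}$.

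With convexity in hand, the convergence argument follows Proposition~2 of Goodfellow et al.: at the optimal critic $D^*$, the gradient of $U(\cdot,D^*)$ is a subgradient of the convex value function, so Condition~3 -- updating $\rho_{\bar\pi}$ to improve the criterion evaluated at $D^*$ -- acts as a subgradient-descent step on a convex function over the convex set $\Sigma$ and drives the value toward its global minimum. Because $\Sigma$ is closed and convex and $L^2$ is a Hilbert space, the minimizer $\rho^{*}$ is the unique projection of $\rho_{\bar\pi_E}$ onto $\Sigma$, so $\rho_{\bar\pi}^{(k)}\to\rho^{*}$ in $L^2$, hence in measure; the one-to-one policy--occupancy correspondence (Proposition~3.1 of \cite{ho2016generative}) then transfers this back to the policy iterates.

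The main obstacle I anticipate is twofold. The first is making the inner maximization well posed: the statement writes $\max_D$ with no explicit bound, so the proof must supply the $L^2$ normalization under which the objective yields exactly the $L^2$ distance (a Lipschitz bound, as in true Wasserstein GANs, would instead produce $W_1$ and not the squared-$L^2$ target). The second, more delicate, is the convergence step itself: as in the GAN setting, the clean subgradient-to-global-minimum argument holds only in the idealized non-parametric regime of Conditions~1--3, and care is needed to argue that ``improving the criterion'' in Condition~3 genuinely realizes a convergent subgradient iteration over $\Sigma$ -- including the projection onto the constraint set -- rather than merely a monotone decrease of the objective.
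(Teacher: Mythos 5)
Your proposal follows essentially the same route as the paper's proof: invoke the Goodfellow-style nonparametric convergence argument under Conditions 1--3, observe that the convex constraint set $\Sigma$ does not disturb it, identify the optimal critic $D^*\propto \rho_{\bar\pi_E}(\mathbf{s},\mathbf{a})-\rho_{\bar\pi}(\mathbf{s},\mathbf{a})$, and substitute it back to reduce the outer problem to the $L^2$ objective \eqref{eq: objective occupancy measure two norm}. The only difference is that you make explicit the $L^2$-ball normalization of $D$ that the paper leaves implicit in its ``$\propto$'' (and your flagged obstacle is apt: a Lipschitz restriction, as the Wasserstein framing of \eqref{eq: general constrained minimax formulation} would suggest, yields $W_1$ rather than the squared-$L^2$ target), so your version is, if anything, more careful than the paper's own three-step argument.
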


\begin{proof}
If \eqref{eq: minimax occupancy measure} is unconstrained, i.e., $\Sigma$ is the whole occupancy measure space, ~\cite{goodfellow2014generative} and~\cite{arjovsky2017wasserstein} have shown that the solution converges by iteratively solving the unconstrained minimax problem if the three conditions hold.

Consider the constrained version of the optimization. 
Since both the objective function \eqref{eq: minimax occupancy measure} and the constraint \eqref{eq: constraint minimax constraint} are convex in $\rho_{\bar\pi}$, the constraints will not affect convergence, as long as the three conditions are satisfied.\footnote{We are dealing with the function space which contains $\rho_{\bar\pi}$. The convexity means: for any $\rho_1$ and $\rho_2$ satisfying \eqref{eq: constraint minimax constraint}, their linear interpolation $(1-\lambda)\rho_1+\lambda \rho_2$ also satisfy \eqref{eq: constraint minimax constraint} for any $\lambda\in[0,1]$. It is easy to verify that this condition is true for any $U$, i.e., if $\rho_1$ and $\rho_2$ are $0$ on $U$, then $(1-\lambda)\rho_1+\lambda \rho_2$ should also be $0$ on $U$. Hence, the convexity of the constraint is not affected by the shape of $U$.}  
The solution of the inner maximization in \eqref{eq: minimax occupancy measure} is $D^*(\mathbf{s},\mathbf{a}) \propto \rho_{\bar\pi_E}(\mathbf{s},\mathbf{a})-\rho_{\bar\pi}(\mathbf{s},\mathbf{a})$. Then, in the limit, the problem
\begin{equation}
\min_{\rho_{\bar\pi}\in\Sigma} \int_{\mathcal{A}}\int_{\mathcal{S}}\left[\rho_{\bar\pi_E}(\mathbf{s},\mathbf{a})-\rho_{\bar\pi}(\mathbf{s},\mathbf{a})\right]D^*(\mathbf{s},\mathbf{a})d\mathbf{s}d\mathbf{a}    
\end{equation}
converges to \eqref{eq: objective occupancy measure two norm} in measure.
\end{proof}

According to \cref{thm: convergence and optimality}, if the demonstrated data satisfy the constraint $\Sigma$, then $\rho_{\bar\pi}\rightarrow\rho_{\bar\pi_E}$ in the limit. 
Consider the case that the demonstrated data does not satisfy $\Sigma$. 
The constraint for parameter sharing $\rho_{\pi}(s_i,a_i) =  \rho_{\pi}(s_j,a_j)$  introduces an averaging effect, i.e., the learned single agent policy $\rho_{\pi}$ is an average of the demonstrated single agent policy $\rho_{\pi_{Ei}}(s_i,a_i)$ for all $i$. 
The demonstrated single agent policy $\rho_{\pi_{Ei}}(s_i,a_i)$ is a marginal occupancy measure by integrating out $(s_j,a_j)$ for $j\neq i$ in $\rho_{\bar\pi_E}(\mathbf{s},\mathbf{a})$. 
The average effect results from identity permutation during  minimization of the two norm in \eqref{eq: objective occupancy measure two norm}. 
The constraint for reward augmentation  $\rho_{\bar\pi}(\mathbf{s},\mathbf{a})=0,\forall (\mathbf{s},\mathbf{a})\in U$ introduces a truncation effect, which shrinks the support of $\rho_{\pi}$ such that $\supp(\rho_{\pi})\in\Gamma:=U^c$. 
The truncation effect is illustrated in \cref{fig: constraints}. 

Normally, the constraint from reward augmentation is satisfied in the demonstration data.
For example, the vehicles do not drive off the road.
However, the constraint from parameter sharing may not be satisfied by the demonstrated data, i.e., the assumption that all vehicles are homogeneous may not hold.
Then, as discussed above, the learned policy takes an average over different policies.
Therefore, reward augmentation improves the learning performance in practice, since it encodes prior knowledge.

\begin{figure}[t]
\begin{center}
\resizebox{\columnwidth}{!}{
\includegraphics[]{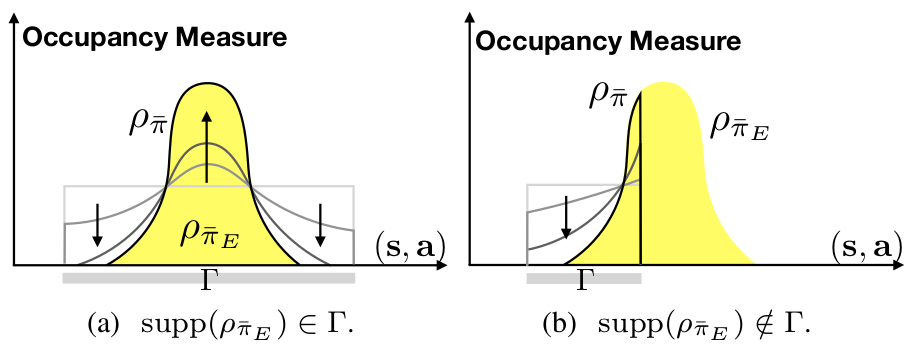}
}
\vspace{-0.5cm}
\caption{\small The effect of constraint $\Gamma$ on convergence and the learned policy. Shaded area represents the occupancy measure of the demonstration $\rho_{\bar\pi_E}$. 
Curves represent the learned occupancy measure $\rho_{\bar\pi}$ at different iterations (lighter colors indicate earlier iterations). 
A uniform distribution over the constraint $\Gamma$ initializes $\rho_{\bar\pi}$.
When $\supp(\rho_{\bar\pi_E})\in\Gamma$, the demonstrated occupancy measure can be recovered by $\rho_{\bar\pi}$. When $\supp(\rho_{\bar\pi_E})\notin\Gamma$, the demonstrated occupancy measure outside $\Gamma$ can not be recovered.}
\label{fig: constraints}
\end{center}
\vspace{-20pt}
\end{figure}

\section{Experiments and Results}
We use the results from PS-GAIL as a baseline to compare against the RAIL algorithm by learning policies and calculating specific metrics, as described in~\cref{metrics}. 
We train three policies for each set of parameters that we want to compare, selecting the best performing policy based on the results of \num{10000} policy rollouts in the \num{100}-agent training environment.
The results presented in~\cref{results} are extracted by evaluating our policies in the same manner, but on scenes sampled from the held-out testing dataset. 

\subsection{Environment}
We evaluate our algorithm using the same simulator as is used in the development of PS-GAIL~\cite{bhattacharyya2018multi}. 
The simulator allows us to sample initial scenes from real traffic data and then simulate for \SI{20}{\second} at \SI{10}{\hertz}. 
The most important feature of this simulator is that expert vehicles observed in the real data can be replaced with policy controlled agents, crucial to both learning a good policy and evaluating final policies.
We replace \num{100} vehicles from the initial scene with vehicles driven by the learned policy.
Another crucial component of the simulator is the extraction of features from the environment which are then fed into the policy controller as observations.
The agent's decisions are translated into actions, which the simulator uses to determine the next state.

We run our experiments on data from the Next Generation Simulation (NGSIM) project~\cite{systematics2007ngsim}.
This dataset is split into three consecutive \SI{15}{\minute} sections of driving data for a fixed section of highway 101 in California.
We use the first section as the training dataset, from which we learn our policies.
The remaining two sections are used for testing and evaluating the quality of the resulting policies. 

\subsection{Reward Augmentation}
Reward augmentation combines imitation learning with reinforcement learning and helps improve state space exploration of the learning agent.
Part of the reinforcement learning reward signal comes from the critic based on imitating the expert, and another signal comes from the externally provided reward specifying the prior knowledge of the expert~\cite{li2017infogail}.

The reward augmentation in our experiments is provided in the form of penalties.

\subsubsection{Binary Penalty}
\label{binaryrewarddetails}
The first method of reward augmentation that we employ is to penalize states in a binary manner, where the penalty is applied when a particular event is triggered.
To calculate the augmented reward, we take the maximum of the individual penalty values.
For example, if a vehicle is driving off the road and colliding with another vehicle, we only penalize the collision. 
This will also be important when we discuss \textit{smoothed penalties}.

We explore penalizing three different behaviors. 
First, we give a large penalty $R$ to each vehicle involved in a collision.
Next, we impose the same large penalty $R$ for a vehicle that drives off the road. 
Finally, performing a hard brake (acceleration of less than \SI{-3}{\meter\per\second^{2}}) is penalized by only $\frac{R}{2}$.
The penalty formula is shown in~\cref{binaryreward}.
We denote the smallest distance from the ego vehicle to any other vehicle on the road as $d_c$ (meters), where $d_c \geq 0$.
We also define the closest distance from the ego vehicle to the edge of the road (meters): $d_{\text{road}} = \min\{d_{\text{left}}, d_{\text{right}}\}$. 
We allow $d_{\text{road}}$ to be negative if the vehicle is off the road. 
Finally, let $a$ be the acceleration of the vehicle, in \SI{}{\meter\per\second^2}. A negative value of $a$ indicates that the vehicle is braking.
Now, we can formally define the binary penalty function:
\begin{align}
\label{binaryreward}
    \text{Penalty} & = 
    \begin{cases}
       R &  d_c = 0 \\
       R & d_{\text{road}} \leq -0.1 \\
       \frac{R}{2} & a \leq -3
    \end{cases} 
\end{align}
The relative values of the penalties indicate the preferences of the designer of the imitation learning agent.
For example, in this case study, we penalize hard braking less than the other undesirable traffic phenomena.

\subsubsection{Smooth Penalty}
In this case, we provide a \textit{smooth penalty} for off-road driving and hard braking, where the penalty is linearly increased from a minimum threshold to the previously defined event threshold for the binary penalty. 

For off-road driving, we linearly increase the penalty from $0$ to $R$ when the vehicle is within \SI{0.5}{\meter} of the edge of the road.
For hard braking, we linearly increase the penalty from $0$ to $R/2$ for acceleration between \SI{-2}{\meter\per\second^{2}} and \SI{-3}{\meter\per\second^{2}}.

\subsection{Metrics}
\label{metrics}
We assess the imitation performance of our driving policies at three levels. These are imitation of local driving behavior, reduction of undesirable traffic phenomena, and imitation of emergent properties of multi-agent driving.
First, to measure imitation of local vehicle behaviors, we use a set of Root Mean Square Error (RMSE) metrics that quantify the divergence between the trajectories generated by our learned policies and the real trajectories in the dataset. 
We calculate the RMSE between the original human driven vehicle and its replacement policy driven vehice in terms of the position, lane offset, and speed.
A perfect policy would have RMSE values close to \num{0} for the entire rollout duration. 

Second, to assess the undesirable traffic phenomena that arise out of vehicular interactions as compared to local, single vehicle imitation, we extract metrics that quantify hard braking, collisions, and offroad driving.
It is important to note that these undesirable traffic phenomena were explicitly incorporated into the formulation of penalty based reward augmentation provided to the RAIL algorithm.
We also extract these metrics of undesirable traffic phenomena for the NGSIM driving data and compare them against the metrics obtained from rollouts generated by our driving policies.

Third, to quantify imitation of driving properties that are emergent in that they are not explicitly modeled in the RAIL formulation, we assess metrics of emergent properties.
These are the average number of lane changes per vehicle, the average timegap per vehicle, and the distribution of speed over all vehicles. 
The timegap for a vehicle is defined as the time spacing (in seconds) to the vehicle in front of it. 
These metrics of emergent driving properties are calculated for the NGSIM driving data and compared against metrics obtained from rollouts generated by our driving policies.

\subsection{Results}
\label{results}
We compare our proposed algorithm, RAIL, against PS-GAIL.
For comparisons between PS-GAIL, traditional GAIL, and rule-based models, we guide the reader to~\cite{kuefler2017imitating,bhattacharyya2018multi}.
The policies generated using RAIL were obtained using $R=2000$ for the binary penalty, and $R=1000$ for the smoothed penalties. 
These $R$ values were determined to be the best after performing a hyperparameter search on penalty values ranging from \num{0} and \num{5000}. 

\begin{figure}[t]
     \centering
     
     \resizebox{\columnwidth}{!}{

\includegraphics[clip]{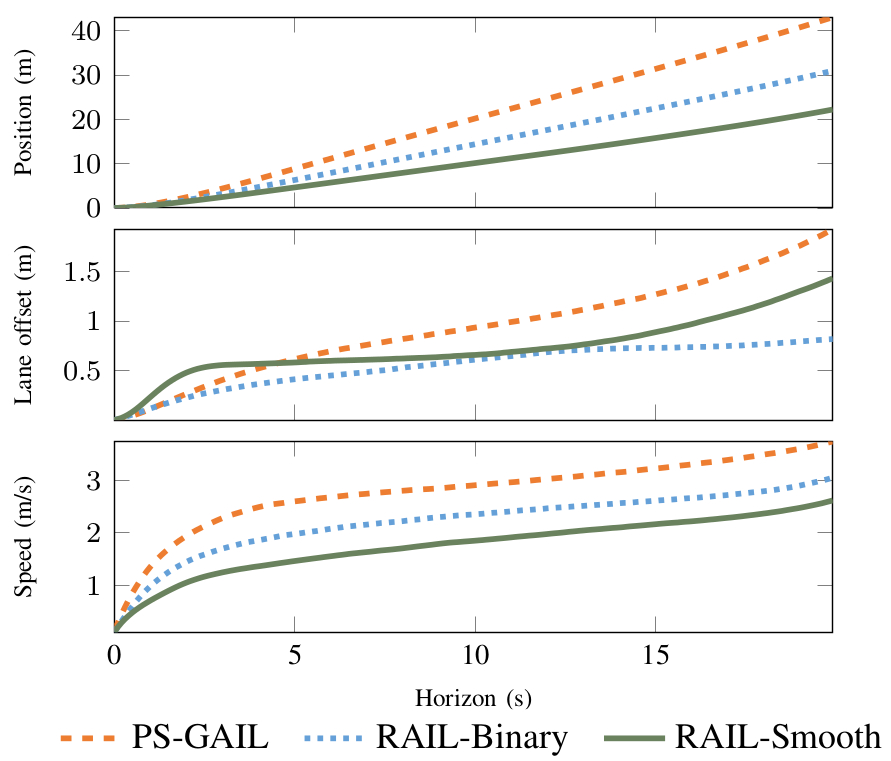}
}
\vspace{-0.5cm}
     \caption{
     \small
     Root mean square error with respect to NGSIM data with increasing time horizon as a measure of local imitation performance. Policies trained using RAIL demonstrate closer driving behavior to human demonstration as compared to the PS-GAIL baseline.
     }
     \label{rmse}
 \end{figure}

\Cref{rmse} shows the RMSE values for speed, lane offset and position of the vehicle driven by imitation learned policies varying with increasing time horizon of the simulation rollout. Policies learned using RAIL show lower values of RMSE as compared to PS-GAIL throughout the rollout duration. Further, between the two RAIL policies, it is observed that smoothing the penalties improves the RMSE performance. Thus, RAIL outperforms PS-GAIL in imitating local driving behavior of individual vehicles. 

\begin{figure}[b]
   \centering
   \resizebox{\columnwidth}{!}{

\includegraphics[]{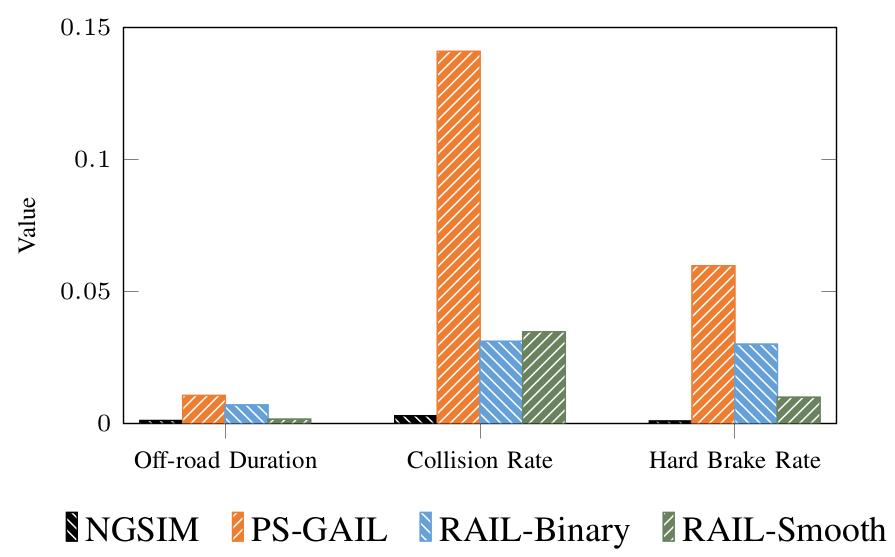}
}
\vspace{-0.5cm}
   \caption{
   \small
   Metrics of undesirable traffic phenomena. These are explicitly penalized in the reward augmentation formulation. RAIL results in policies with lower values of collisions, offroad driving and hard braking as compared to the PS-GAIL baseline.
   }
   \label{undesirablemetrics}
\end{figure}

\Cref{undesirablemetrics} illustrates the number of undesirable traffic phenomena through the metrics of collisions, hard braking, and offroad driving in case of NGSIM data, and policies trained using PS-GAIL and RAIL.
The results show that policies learned using RAIL are less likely to lead vehicles into extreme decelerations, off-road driving, and collisions.
Additionally, for the case where we provide smooth penalties (off-road duration and hard brake), we see significant reductions in the associated metrics as compared to PS-GAIL.

\begin{figure}[]
   \centering
   \resizebox{\columnwidth}{!}{

\includegraphics[]{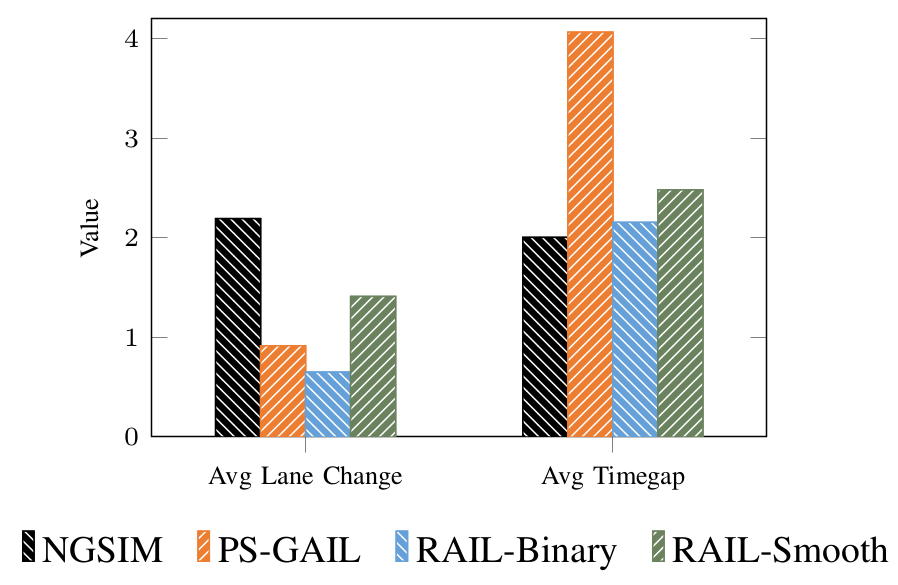}
}
\vspace{-0.5cm}
   \caption{
   \small 
   Metrics of emergent driving behavior. Policies trained using reward augmentation result in lane changing and timegap behavior that is closer to human driving as compared to PS-GAIL.
   }
   \label{emergentmetrics}
\end{figure}

\Cref{emergentmetrics,veldistb} show the imitation performance of policies in terms of emergent properties of driving behavior.
Average number of lane changes per agent, and average timegap per agent are illustrated in~\cref{emergentmetrics}.
While it can be argued that the results showing improvements in reducing undesirable traffic phenomena in~\cref{undesirablemetrics} can be attributed directly to penalizing via reward augmentation, the properties of driving reported in~\cref{emergentmetrics} are truly emergent in that they arise out of vehicular interactions that are not explicitly accounted for in the imitation learning formulation.
Policies trained using reward augmentation result in driving behavior that leads to emergent properties that are closer to human demonstrations as compared to the baseline policies trained using PS-GAIL.

\begin{figure}[b]
   \centering
   \resizebox{\columnwidth}{!}{
\includegraphics[]{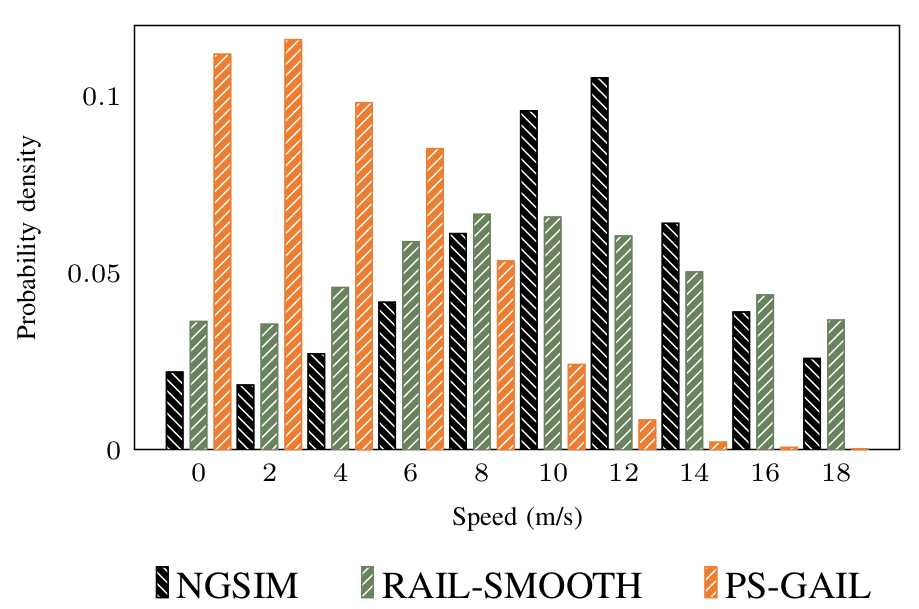}
  }
  \vspace{-0.5cm}
  \caption{
\small
Distribution of speeds over all cars over all the simulation trajectories for human demonstration, policies trained using PS-GAIL, and policies trained using RAIL. RAIL driving policies result in closer velocity distribution to the human demonstrations.}
\label{veldistb}

\end{figure}

The distribution of speeds over all the vehicles in the trajectory is shown in~\cref{veldistb}. The speed values over the trajectory have been normalized and presented as a probability distribution. 
Policies trained using reward augmentation provide speed distributions more closely matching the NGSIM data. Further, the mode of the distribution is closer to human demonstrations in case of RAIL.

\section{Conclusions}
This paper discusses the problem of imitation learning in multi-agent settings in the context of autonomous driving.
The goal of this paper is to create reliable models of human driving behavior that can imitate emergent properties of driving behavior arising out of local vehicular interactions.
Specifically, we provide a framework for multi-agent imitation learning in terms of policy optimization with added constraints using GAIL.
We demonstrate improved performance in learning human driving behavior models as measured by both local and emergent imitation performance.
The main contribution of this paper was including reward augmentation in the imitation learning framework as an added reinforcement signal to the learning agent.
Using externally specified rewards, the designer of the learning agent can provide prior knowledge to guide the training process.
This imitation learning procedure was demonstrated using the RAIL algorithm.

Simulation experiments were performed on learned driving policies from human driving demonstrations in the NGSIM dataset.
These experiments were performed in the multi-agent setting where multiple cars in the NGSIM scenes were replaced by the vehicles driven using polcies learned using RAIL.
Resulting metrics such as root mean square error, off-road duration, collision rates and hard brake rate were used to assess the imitation performance.
The results obtained showed better imitation performance using reward augmentation as compared to previous multi-agent results, especially in terms of imitating emergent properties of driving behavior, as measured by lane changes, timegap and speed distributions of the resulting driving behavior.
Further, this paper also provided theoretical convergence guarantees in the reward augmented imitation learning framework.

A limitation of this approach is that it does not capture different types of driving behavior.
Future work will include latent states to capture different driving styles and enable learning different policies for different agents.
Another interesting extension of this work would focus on populating driving scenarios with these models trained using imitation learning. 
Such scenarios will enable validation of autonomous cars driven using planning algorithms by simulating interactive driving behavior between autonomous vehicles and human driven vehicles.
Finally, policies trained using the RAIL algorithm will be deployed in simulation with an autonomous vehicle for validation testing.

\section*{Acknowledgments}
We thank Blake Wulfe and Jeremy Morton for useful discussions. Toyota Research Institute (TRI) provided funds to assist the authors with their research, but this article solely reflects the opinions and conclusions of its authors and not TRI or any other Toyota entity.


\clearpage
\newpage

\bibliographystyle{IEEEtran}
\bibliography{ref}


\end{document}